\DeclareMathAlphabet{\mathpzc}{OT1}{pzc}{m}{it}
\newtheorem{theorem}{Theorem}[section]
\newtheorem{proposition}{Proposition}[section]
\newtheorem{remark}{Remark}[section]
\newtheorem{definition}{Definition}[section]
\DeclareMathAlphabet{\mathscr}{LS1}{stixscr}{m}{n}
\title{Omega and Sharpe ratio}
\author[1,2]{Eric Benhamou\thanks{eric.benhamou@aisquareconnect.com/dauphine.eu.\\ The authors would like to thank Francois Bertrand and Stephane Mysona for fruitful conversations about the Omega ratio.}}
\affil[1]{A.I Square Connect}
\affil[2]{Lamsade PSL}
\author[1]{Beatrice Guez}
\author[1]{Nicolas Paris}
\begin{document}
\maketitle

\begin{abstract}
Omega ratio, defined as the probability-weighted ratio of gains over losses at a given level of expected return, has been advocated as a better performance indicator compared to Sharpe and Sortino ratio as it depends on the full return distribution and hence encapsulates all information about risk and return. We compute Omega ratio for the normal distribution and show that under some distribution symmetry assumptions, the Omega ratio is oversold as it does not provide any additional information compared to Sharpe ratio. Indeed, for returns that have elliptic distributions, we prove that the optimal portfolio according to Omega ratio is the same as the optimal portfolio according to Sharpe ratio. As elliptic distributions are a weak form of symmetric distributions that generalized Gaussian distributions and encompass many fat tail distributions, this reduces tremendously the potential interest for the Omega ratio.
\end{abstract}

\medskip

\noindent\textbf{keywords}: Omega ratio, Sharpe ratio, normal distribution, elliptical distribution.

\section{Introduction}
Omega ratio has been introduced on the pledge that the Sharpe ratio and many performance measurement and risk ratios rely on two excessive simplifications. The first one states that a limited numbers of statistical characteristics can fully describe returns.  Typically mean and variance, or first and second statistical moments for Sharpe or information ratio, mean and downside standard deviation for Sortino ratio. The second one states that this performance ratios should have a return level. In an enlightening research, \cite{Keating_Shadwick_2002a} and \cite{Keating_Shadwick_2002b} introduce the Omega $\omega$ ratio, and claimed that this \textit{universal} performance measure, designed to redress the information impoverishment of traditional mean-variance analysis would address these concern. They emphasize that the Omega metric has the great advantage over traditional measures to encapsulate all information about risk and return as it depends on the full return distribution, as well as to avoid looking at a specific level as this measure is provided for all level, hence entitling each investor to look at his/her risk appetite level. Strictly speaking, the omega ratio is defined as the probability-weighted ratio of gains over losses at a given level of expected return. In financial words, this ratio determines \textit{the quality of the investment bet relative to the return threshold}. As nice as it may sound, we argue that this is oversold, as for a large class of returns distributions, that is distributions that are elliptic, we prove in this paper that the optimal portfolio according to the Omega ratio is also the optimal portfolio according to the Sharpe ratio. In other words, optimizing weights for a given portfolio of assets in order to get the optimal Omega ratio is equivalent to optimizing the portfolio weights to find the optimal Sharpe ratio. As elliptic distributions are a weak form of symmetric distributions that generalized Gaussian distributions and encompass many fat tail distributions, this kills the potential interest of the Omega ratio.

\section{Related Work}
Omega ratio have been studied in many papers producing a vast literature on it. We will review here the main papers. 
\cite{Winton_2003} is an empirical research work that looked at the historical performance of CTAs hedge funds and that put light on Omega ratio. \cite{Passow_2004} looked at the analytical property and tractabilty of the Omega ratio for Johnson distributions.  \cite{Kaffel_2010} investigated performance measurement for financial structured products, thanks to the so called Sharpe-Omega ratio that is an extension of the Omega ratio. Their originality was to compute downside risk measure using put option volatility instead of historical volatility. This allows them to take account of the asymmetry of the return probability distribution. They determined that the optimal combination of risk free, stock and call/put instruments with respect to this performance measure, is not necessarily increasing and concave as opposed to traditional optimal Sharpe ratio portfolio for the same instruments.
Similarly, \cite{Gilli_2011} studied portfolios using the Omega function, looking at their empirical performance, especially the effects of allowing short positions. Their originality was to consider short position which is traditionally ignored. They found that overall, short positions can improve risk-return characteristics of a portfolio but mitigated this findings with the constraints involved in short positions that often carries additional constraints in terms of transactions costs and liquidity.

\cite{Bertrand_2011} analyzed the performance of two main portfolio insurance methods, the OBPI and CPPI strategies, using downside risk measures, thanks to Omega measure. They showed that CPPI strategies perform better than OBPI. \cite{Kapsos_2014} looked at the maximum Omega ratio. They established that it can be computed as a linear program optimization problem. While the Omega ratio is theoretically a nonconvex function, \cite{Kapsos_2014} l showed that this can be reformulated as a convex optimization problem that can be solved thanks to a linear program. This convex reformulation for Omega ratio maximization can be seen as a direct analogy of the mean-variance framework for the Sharpe ratio maximization and paved the way for our work that will show the strong connection between Omega and Sharpe ratio. \cite{vanDyk_2014} provided a nice empirical research on the difference of ranking between Sharpe and Omega ratio. They compared the ranking of 184 international long/short (equity) hedge funds, over the period January 2000 to December 2011 using their monthly returns. They concluded that Omega ratio does indeed provide useful additional information to investors compared to the one only provided by Sharpe ratio alone.
\cite{Belles-Sampera_2014} generalized Omega ratio in a so called GlueVaR risk measure. It combines Value-at-Risk and Tail Value-at-Risk at different tolerance levels and has analytical closed-form expressions for commonly used distribution like Normal, Log-normal, Student-t and Generalized Pareto distributions. They showed that under certain condition, a subset of GlueVaR risk measures fulfils the property of tail-subadditivity
\cite{Sharma_2016} worked on the threshold to be used in portfolio optimization with Omega ratio. In order to maximize the Omega ratio for the overall portfolio, one needs to consider a threshold point to compute the Omega ratio as optimizing at all thresholds is not realistic. They decided to use the conditional value-at-risk at an $\alpha $ confidence level ${\mathrm{CVaR}_{\alpha }}$ of the benchmark market.  They argue that this $\alpha $-value reflects the attitude of an investor towards losses. Like in \cite{Kapsos_2014}, this formulation can be cast as a 
linear program for mixed and box uncertainty sets and a second order cone program under ellipsoidal sets, and hence becomes tractable. They showed that the optimal portfolios resulting from the Omega-${\mathrm{CVaR}_{\alpha }}$ model exhibit a superior performance compared to the classical ${\mathrm{CVaR}_{\alpha }}$ model in the sense of higher expected returns, Sharpe ratios, modified Sharpe ratios, and lesser losses in terms of ${\mathrm{VaR}_{\alpha }}$ and ${\mathrm{CVaR}_{\alpha }}$ values. 
\cite{Guo_2016} worked on the property for one asset to have a higher Omega ratio than a second one. They showed that second-order stochastic dominance (SSD) and/or second-order risk seeking stochastic dominance (SRSD) alone for any two prospects is not sufficient to imply that the Omega ratio of one asset is always greater than that of the other one. Indeed, they proved that the second-order stochastic dominance only implies higher Omega ratios only for thresholds that are between the mean of the smaller-return asset and the mean of the higher-return asset. When considering first-order stochastic dominance, the restriction on the thresholds does not apply and first-order stochastic dominance implies preference of the corresponding Omega ratios for any threshold.
\cite{Krezolek_2017} introduced an extension of Omega ratio called GlueVaR risk measure and illustrated this on metals market investments. GlueVaR risk measures combine Value-at-Risk and Tail Value-at-Risk at different tolerance levels and have analytical closed-form expressions for the most frequently used distribution functions in many applications, i.e. Normal, Log-normal, Student-t and Generalized Pareto distributions. 
\cite{Metel_2017} is an illuminating paper as it is the first to notice the correspondence between Sharpe and Omega ratio under jointly elliptic distributions of returns. Compared to our work, their proof is more convoluted and does not emphasize the important fact that elliptic distributions satisfy some symmetry properties that validates the proof.
\cite{Rambo_2017} ranked fund returns and compared results obtained with those obtained from the Sharpe ratio over two periods: 2001 to 2007 and 2008 to 2013. They found that Omega ratio provides far superior rankings.
\cite{Guo_2018} investigated whether there is any Sharpe or Omega ratio rule that prove that one asset outperforms another one. They found that Sharpe ratio rule is not able to detect preference under general strong dominance cases. In contrast to mean-variance rule implied by Sharp ratio that does not work under first order dominance, Omega ratio can help detecting better performance under first order stochastic dominance.
\cite{Caporin_2018} is a nice work on the overall developments on Omega ratio over the last two decades. They emphasized two flaws of Omega ratio. First, Omega ratio does not comply with Second-order Stochastic Dominance as already noted by \cite{Guo_2016}. Second, trade-off between return and risk corresponding to the Omega measure, is highly influenced by the mean return. They illustrated their work on long-only asset and hedge fund databases to confirm the issues with Omega ratio.
\cite{Bernard_2019} proved that in a continuous-time setting, the problem of maximizing Omega ratio is ill-posed and leads to excessive risk taking. They investigated if additional constraints could offset the Omega ratio risk problem but concluded that this was not obvious and caution should be taken when using Omega ratio for making asset allocation decisions.

\section{Contribution and paper outline} \label{sec:Outline}
In this paper, we first present elliptical distributions and some of their key properties in section \ref{sec:ED}. We emphasize that these distributions encompasses many standard distributions and are a generalization of normal distribution in terms of symmetry assumptions. \ref{sec:Omega} we provide the exact computation of the Omega ratio for the normal distribution and show that optimizing the Omega ratio for a portfolio is indeed similar to optimizing the Sharpe ratio. We extend this result by noticing that the equivalence between Omega and Sharpe ratio relies on properties of symmetry of the normal distribution and can be extended to elliptical distributions. We show that this result can be proved very rapidly thanks to the specific nature of the Omega ratio. We conclude that Omega ratio is oversold and is in most cases not providing additional information compared to stochastic dominance.

\section{Elliptical distributions} \label{sec:ED}
It is well known that a normal distribution is fully characterized by its first and second moments. It is less well known or at least less emphasized that the normal distribution also has a very nice property in terms of symmetry with respect to its first and second moments. If one plots iso-density curves in two or three dimension for the multi variate normal, one would obtain ellipsoid as illustrated by \ref{fig:contour1} and \ref{fig:contour2}. This symmetry according to axes leads to the so called elliptic distributions called like this because they are elliptically contoured distributions. Elliptic distribution were introduced by \cite{Kelker_1970} and further studied in \cite{Cambanis_1981} and by \cite{Fang_1990}. 

\begin{figure}[!ht]
\centering
\includegraphics[width=7cm]{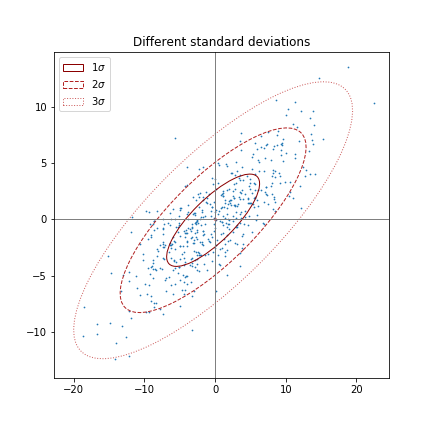}
\vspace{-0.5cm}
\caption{Contour plot for normal with positive correlation} \label{fig:contour1}
\end{figure}

\begin{figure}[!ht]
\centering
\includegraphics[width=7cm]{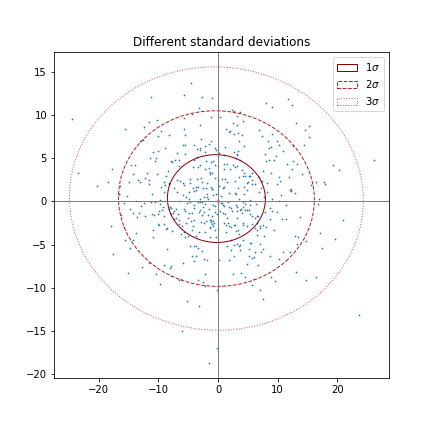}
\vspace{-0.5cm}
\caption{Contour plot for normal with weak correlation} \label{fig:contour2}
\end{figure}

\subsection{spherical distributions}
A first intuitive idea is to define spherical distributions. Let us denote by $\mathbf{U} \in \mathbb{R}^{n \times n}$ an orthogonal matrix, that is $\mathbf{U} \mathbf{U}^T = \mathbf{U}^T \mathbf{U} =  \mathbf{I}_n$. The matrix $\mathbf{U}$ defines an orthogonal linear transformation $L_{\mathbf{U}}: \mathbb{R}^{n \times n} \mapsto   \mathbb{R}^{n \times n}$ defined by $\mathbf{X} \rightarrow \mathbf{U} \mathbf{X}$. Please note that this is not necessarily a rotation as the matrix associated with the transformation is not necessarily a rotation matrix. It does not necessarily comply with $\det(\mathbf{U}) = 1$.

\begin{definition}
A random vector $\mathbf{X} = (X_1,\ldots, X_n)$ has a \textit{spherical distribution} if it is invariant according to any orthogonal linear transformation, that is for any  $\mathbf{U} \in \mathbb{R}^{n \times n}$ such that  $\mathbf{U} \mathbf{U}^T = \mathbf{U}^T \mathbf{U} =  \mathbf{I}_n$, we have
$\mathbf{U} \mathbf{X} \sim \mathbf{X}$
where $\mathbf{X} \sim \mathbf{Y}$ means that $\mathbf{X}$ and $\mathbf{Y}$ are equal in law or in distribution. 
\end{definition}

\begin{remark}
In particular, the distribution of a spherical distributed random variable, $\mathbf{X}$, is invariant under rotations as rotations matrices are special cases of orthogonal matrices.

\end{remark}
It can be shown easily (see \cite{Fang_1990})  that $\mathbf{X}$ is a spherical distribution is equivalent to the existence of a function $\phi(.)$ such that for any $t \in \mathbb{R}^{n}$, we have 
\begin{equation}
\Phi_{\mathbf{X}}(t) = \phi(t^T t) = \phi(t_1^2 + \ldots + t_n^2)
\end{equation}
where $\Phi_X(t)$ denotes the characteristic function $\Phi_{\mathbf{X}}(t) = \mathbb{E}[ e^{i t^T \mathbf{X} } ]$. The function $\phi$ is referred to as the generator of the distribution and it is quite common to write $\mathbf{X} \sim \mathcal{S}_n(\phi)$ where $\mathcal{S}_n$ denotes the spherical distribution in $ \mathbb{R}^{n}$. A nice theorem that translates into an enlightening geometrical interpretation is that there exists a representation form for any spherical distribution. This is the following theorem known as the spherical representation theorem: 

\begin{theorem}
A random vector $\mathbf{X} = (X_1,\ldots, X_n)$ has a spherical distribution if and only if there exists $\mathbf{S}$ 
a random variable uniformly distributed on the unit sphere $\mathcal{S}^{n-1} = \left\{ s \in \mathbb{R}^{n} : s s^T = 1 \right\}$ 
and $\mathcal{R} \ge 0$ a random variable independent of $S$ such that 
\begin{equation} 
\mathbf{X} \sim \mathcal{R}  \mathbf{S}.
\end{equation}
If in addition, the scalar random variable $\mathcal{R} $ has a finite second moment $\mathbb{E}[\mathcal{R} ^2] < \infty$, then the first two moments of $\mathbf{X}$ exist and are given by:
\begin{equation}
\mathbb{E}  [\mathbf{X} ] = 0, \qquad \mathbb{C}ov(\mathbf{X} )= 1/ n \,\, \mathbb{E}[\mathcal{R} ^2]  \mathbf{I}_n. 
\end{equation}
\end{theorem}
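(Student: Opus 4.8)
The plan is to prove the two implications separately and then compute the moments. For the easy ("if") direction, I would start from $\mathbf{X} \sim \mathcal{R}\mathbf{S}$ with $\mathbf{S}$ uniform on $\mathcal{S}^{n-1}$ and $\mathcal{R}\ge 0$ independent of $\mathbf{S}$, and verify sphericity directly: for any orthogonal $\mathbf{U}$ we have $\mathbf{U}\mathbf{X} \sim \mathcal{R}(\mathbf{U}\mathbf{S})$, and since the uniform law on the sphere is invariant under orthogonal maps, $\mathbf{U}\mathbf{S} \sim \mathbf{S}$ while $\mathbf{U}\mathbf{S}$ stays independent of $\mathcal{R}$; hence $\mathbf{U}\mathbf{X} \sim \mathcal{R}\mathbf{S} \sim \mathbf{X}$, which is exactly the defining invariance of a spherical distribution.

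For the harder ("only if") direction, I would construct the representation by passing to polar coordinates. Set $\mathcal{R} := \|\mathbf{X}\|$ and, on the event $\{\mathbf{X}\neq 0\}$, $\mathbf{S} := \mathbf{X}/\|\mathbf{X}\|$, defining $\mathbf{S}$ by an independent uniform draw on the null contribution $\{\mathbf{X}=0\}$ (harmless, since there $\mathcal{R}=0$ and the product is unaffected). Because orthogonal maps preserve the Euclidean norm, sphericity $\mathbf{U}\mathbf{X}\sim\mathbf{X}$ yields $(\mathcal{R},\mathbf{U}\mathbf{S})\sim(\mathcal{R},\mathbf{S})$ for every orthogonal $\mathbf{U}$. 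The crux is to deduce from this joint invariance that the conditional law of $\mathbf{S}$ given $\mathcal{R}=r$ is orthogonally invariant for almost every $r$, and then to invoke the fact that the uniform distribution is the \emph{unique} probability measure on $\mathcal{S}^{n-1}$ invariant under the full orthogonal group (uniqueness of the normalized Haar measure under the transitive $O(n)$-action). This forces the conditional law to be uniform and free of $r$, so that $\mathbf{S}$ is uniform, independent of $\mathcal{R}$, and $\mathbf{X}=\mathcal{R}\mathbf{S}$ holds by construction.

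I expect the main obstacle to be precisely this uniqueness/disintegration step: making rigorous that "$(\mathcal{R},\mathbf{U}\mathbf{S})\sim(\mathcal{R},\mathbf{S})$ for all $\mathbf{U}$" transfers to invariance of the regular conditional distribution $\mathbf{S}\mid\mathcal{R}$, and that invariance on the sphere pins down the uniform law. A cleaner route that sidesteps conditioning is to argue with characteristic functions: sphericity is equivalent to $\Phi_{\mathbf{X}}(t)=\phi(t^T t)$ as stated above, and one checks that $\mathcal{R}\mathbf{S}$ produces a characteristic function of exactly this radial form, with $\phi$ determined by the law of $\mathcal{R}$ through the (known) Fourier transform of the uniform surface measure; matching generators then identifies the two laws. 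I would run the conditioning argument as the main line and record the characteristic-function argument as the shorter alternative.

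Finally, for the moments, assuming $\mathbb{E}[\mathcal{R}^2]<\infty$ I would use independence throughout. Since $-\mathbf{I}_n$ is orthogonal, $\mathbf{S}\sim-\mathbf{S}$, hence $\mathbb{E}[\mathbf{S}]=0$ and $\mathbb{E}[\mathbf{X}]=\mathbb{E}[\mathcal{R}]\,\mathbb{E}[\mathbf{S}]=0$. For the covariance, independence gives $\mathbb{C}ov(\mathbf{X})=\mathbb{E}[\mathbf{X}\mathbf{X}^T]=\mathbb{E}[\mathcal{R}^2]\,\mathbb{E}[\mathbf{S}\mathbf{S}^T]$. The matrix $M:=\mathbb{E}[\mathbf{S}\mathbf{S}^T]$ satisfies $\mathbf{U}M\mathbf{U}^T=M$ for every orthogonal $\mathbf{U}$ (again from $\mathbf{U}\mathbf{S}\sim\mathbf{S}$), and the only matrices commuting with the entire orthogonal group in this sense are scalar multiples of the identity, so $M=c\,\mathbf{I}_n$; taking the trace and using $\|\mathbf{S}\|=1$ gives $cn=\mathbb{E}[\|\mathbf{S}\|^2]=1$, i.e. $c=1/n$. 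Therefore $\mathbb{C}ov(\mathbf{X})=(1/n)\,\mathbb{E}[\mathcal{R}^2]\,\mathbf{I}_n$, as claimed.
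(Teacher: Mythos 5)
The paper does not actually prove this theorem: it is quoted as the classical spherical representation theorem and deferred to the cited references (Kelker 1970; Cambanis et al.\ 1981; Fang et al.\ 1990), so there is no in-paper argument to compare against. Your proposal is the standard proof and is essentially correct in all three parts. The ``if'' direction and the moment computations are complete as written; note only that $\mathbb{E}[\mathcal{R}]<\infty$ follows from $\mathbb{E}[\mathcal{R}^2]<\infty$ by Cauchy--Schwarz, and that a direct sign-flip and coordinate-permutation argument already forces $M=\mathbb{E}[\mathbf{S}\mathbf{S}^T]$ to be a scalar matrix without invoking Schur's lemma. For the ``only if'' direction you correctly flag the delicate disintegration step; a clean way to discharge it without regular conditional distributions is to note that for bounded measurable $f,g$ and every orthogonal $\mathbf{U}$ one has $\mathbb{E}[f(\mathcal{R})g(\mathbf{S})]=\mathbb{E}[f(\mathcal{R})g(\mathbf{U}\mathbf{S})]$, then average over $\mathbf{U}$ with respect to the normalized Haar measure of the orthogonal group and apply Fubini: since the push-forward of Haar measure under the orbit map $\mathbf{U}\mapsto\mathbf{U}s$ is the uniform law $\sigma$ on $\mathcal{S}^{n-1}$ for every fixed unit vector $s$, this gives $\mathbb{E}[f(\mathcal{R})g(\mathbf{S})]=\mathbb{E}[f(\mathcal{R})]\int_{\mathcal{S}^{n-1}}g\,d\sigma$, which yields independence and uniformity in one stroke and bypasses the ``simultaneously for all $\mathbf{U}$'' issue. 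Your characteristic-function alternative is fine for the ``if'' direction, but as sketched it does not obviously deliver the converse (that every radial characteristic function is such a scale mixture is precisely the Schoenberg-type content one is trying to prove), so the conditioning/averaging argument should indeed remain the main line.
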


\begin{remark}
The random variable $\mathcal{R}$ is often referred to as the  \textit{radius} of the spherical distribution and can be interpreted geometrically in dimension two or three as the mean radius of the ellipsoid, also seen as the radius the major axis of the ellipse. 
\end{remark}

\begin{remark} 
Spherical distributions means that there are distributions that comply with some symmetrical properties (invariance along any rotation but also along any orthogonal linear transformation). This should not be confused with distributions on the sphere like the Fisher–Bingham or Kent distribution, the Von Mises–Fisher distribution or even the Bingham distribution that are sometimes incorrectly called spherical distributions! They are not at all spherical distributions but rather distributions on the sphere which means that they are probability distribution such that the probability assigned to the unit sphere is 1 and 0 elsewhere. In general, a spherically distributed random vector does not have its density support restricted to the sphere. It does not, either necessarily possess a density. However, if it does, the marginal densities of dimension smaller than $n-1$ are continuous and the marginal densities of dimension smaller than $n-2$ are differentiable (except possibly at the origin in both cases). Uni-variate marginal densities for $n$ greater than $2$ are non-decreasing on $(-\infty, 0)$ and non-increasing on $(0, \infty)$.
\end{remark}

\begin{remark} 
A typical example of a spherical distribution is the multi variate normal distribution with covariance matrix proportional to the identity. 
The multivariate t-distribution is a typical example of a fat tailed spherical distribution. Let $Z$ be a multi variate spherical normal distribution $Z \sim \mathcal{N}_n(0,\mathcal{I}_{n})$ and $\mathcal{R}$ be a chi squared distribution  $\mathcal{R} \sim \chi^2_k$ with  $k$ degrees of freedom independent of $Z$ .The random vector \begin{equation}
Y=\sqrt{k} \ \frac{Z}{\mathcal{R}}
\end{equation} has a multivariate $t$-distribution with $k$ degrees of freedom. Thanks to this writing, it is easy to see that its belongs to the family of n-dimensioned spherical distributions.
\end{remark}

Obviously, for multi variate Gaussian, imposing that the covariance matrix is proportional to the identity matrix is too restrictive. This naturally leads to the extension of spherical distributions that are elliptical distributions.

\subsection{Elliptical distributions}

\begin{definition}
A $(n \times 1)$ random vector $\mathbf{X}$ is said to have an elliptical distribution with parameters $\mu$ a $(n \times 1)$ constant vector and $\Sigma$ a $(n \times n)$ constant matrix if $\mathbf{X}$ has the same distribution as 
$\mu + \Lambda^\top \mathbf{Y}$, where $\mathbf{Y} \sim \mathcal{S}_n(\phi)$ and $\Lambda$ is a $(k \times n)$ matrix such that 
$\Lambda^\top \Lambda=\Sigma$ with $\mathop{\rm {rank}}(\Sigma)=k$. We shall write $\mathbf{X} \sim EC_n(\mu,\Sigma,\phi)$.
\end{definition}

\begin{remark} Elliptical distributions can be seen as an extension of the multi variate normal distribution denoted by $\mathcal{N}_p(\mu,\Sigma)$. 

Let $\mathbf{X}$ be a  multinormal distribution $\mathbf{X} \sim \mathcal{N}_n (\mu, \Sigma) $. Then $\mathbf{X} \sim \mathcal{E}_n(\mu,\Sigma,\phi)$ and $\phi(u)=\exp{(-u/2)}$. As the density surface for the multivariate normal distribution is given by $f(x) = \mathop{\rm {det}}(2 \pi \Sigma)^{ -\frac{1}{2} }
\exp\{-\frac{1}{2} (x-\mu)^\top \Sigma^{-1} (x-\mu) \}$, it is easy to see that this density is constant on ellipses (see for instance \ref{fig:contour1}). This explains why these distributions have been called \textit{elliptical}.
\end{remark}

Below are  summarized the main properties of elliptical distributions:

\begin{theorem}
Elliptical random vectors $\mathbf{X}$ have the following properties:
\begin{itemize}
\item Basic properties: any linear combination of elliptically distributed variables is elliptical and any of its marginal distributions is also elliptical.

\item Representation form: a scalar function $\phi(.)$ can determine an elliptical distribution $\mathcal{E}_n(\mu,\Sigma,\phi)$ for every $ \mu \in \mathbb{R}^p$ and $\Sigma \geq 0$ with $\mathop{\rm {rank}}(\Sigma)=k$ iff $\phi(t^\top t)$ is a $p$-dimensional characteristic function. If there exist two representation forms of $\mathbf{X}$ such that $\mathbf{X} \sim \mathcal{E}_n(\mu,\Sigma,\phi)$ and $\mathbf{X} \sim \mathcal{E}_n(\mu^*,\Sigma^*,\phi^*)$, then there exists a constant $c > 0$ such that
\begin{displaymath}
\mu=\mu^*, \quad \Sigma=c\Sigma^*, \quad
\phi^*(.) = \phi(c^{-1}.).
\end{displaymath}

\item Characteristic function and representation: the characteristic function of $X$, $\Phi_{\mathbf{X}}(t)$ is of the form
\begin{displaymath}\label{sec:characteristic_function}
\Phi_{\mathbf{X}}(t) = e^{\textrm{\bf i}t^\top \mu}\phi(t^\top \Sigma t)
\end{displaymath}
for a scalar function $\phi$.
$\mathbf{X} \sim \mathcal{E}_n(\mu,\Sigma,\phi)$ with $\mathop{\rm {rank}}(\Sigma)=k$ iff $\mathbf{X}$ has the same distribution as:
\begin{gather}\label{eq:elliptic_representation}
\mu + \mathcal{R} \Lambda^\top U^{(k)}
\end{gather}
where $\mathcal{R} \geq 0$ is a random scalar variable independent of $U^{(k)}$. $U^{(k)}$ is a random vector distributed uniformly on the unit sphere surface in $\mathbb{R}^k$ and $\Lambda$ is a $(k \times n)$ matrix such that $\Lambda^\top \Lambda=\Sigma$.

\item Assume that $\mathbf{X} \sim \mathcal{E}_n(\mu,\Sigma,\phi)$ and $E (\mathcal{R} ^2) < \infty$. Then its first two moments exist and are given by:
\begin{displaymath}\mathbb{E} (X) = \mu, \quad \mathbb{C}ov(X) =
\frac{\mathbb{E} (\mathcal{R}^2)}{\mathop{\rm {rank}}(\Sigma)}\Sigma= - 2\phi^\top (0)\Sigma.\end{displaymath}

\item Assume that $\mathbf{X} \sim \mathcal{E}_n(\mu,\Sigma,\phi)$ with $\mathop{\rm {rank}}(\Sigma)=k$. Then 
\begin{displaymath}Q(X)= (X - \mu)^\top \Sigma^-(X - \mu)\end{displaymath}
has the same distribution as $\mathcal{R}^2$ in equation \ref{eq:elliptic_representation}

\item An elliptic distribution has a density function of the form
\begin{displaymath}f(x)=C\cdot g\left((x-\mu )^\top \Sigma ^{-1}(x-\mu )\right) \end{displaymath}
where $C$ is the normalizing constant, $x$ is an $n$-dimensional random vector with median vector $mu$  (which is also the mean vector if the latter exists), and $\Sigma$ is a positive definite matrix which is proportional to the covariance matrix if the latter exists.
\end{itemize}
\end{theorem}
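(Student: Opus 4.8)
The plan is to derive every item of the theorem from a single master identity obtained by feeding the spherical representation theorem into the definition of $\mathcal{E}_n(\mu,\Sigma,\phi)$. By definition $\mathbf{X}\sim\mu+\Lambda^\top\mathbf{Y}$ with $\mathbf{Y}\sim\mathcal{S}_n(\phi)$ and $\Lambda^\top\Lambda=\Sigma$; substituting the spherical decomposition $\mathbf{Y}\sim\mathcal{R}\,\mathbf{S}$ gives
\[
\mathbf{X}\sim\mu+\mathcal{R}\,\Lambda^\top U^{(k)},
\]
which is exactly equation \ref{eq:elliptic_representation}. I would establish this representation first, since each remaining property is then either an algebraic manipulation of it or of the characteristic function it induces.

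Second, I would compute the characteristic function directly from the definition, using the spherical generator identity $\Phi_{\mathbf{Y}}(s)=\phi(s^\top s)$:
\begin{align*}
\Phi_{\mathbf{X}}(t) &= \mathbb{E}\!\left[e^{\mathbf{i}\,t^\top(\mu+\Lambda^\top\mathbf{Y})}\right] = e^{\mathbf{i}\,t^\top\mu}\,\Phi_{\mathbf{Y}}(\Lambda t)\\
&= e^{\mathbf{i}\,t^\top\mu}\,\phi\!\left(t^\top\Lambda^\top\Lambda\,t\right) = e^{\mathbf{i}\,t^\top\mu}\,\phi\!\left(t^\top\Sigma t\right).
\end{align*}
The basic properties are then immediate: for any affine map $\mathbf{X}\mapsto A\mathbf{X}+b$ one finds $\Phi_{A\mathbf{X}+b}(t)=e^{\mathbf{i}\,t^\top(b+A\mu)}\phi\!\left(t^\top A\Sigma A^\top t\right)$, which is again of elliptical form with location $A\mu+b$, scale $A\Sigma A^\top$ and the same generator $\phi$; marginals are recovered by letting $A$ select a subset of coordinates, and scalar linear combinations by taking $A$ a single row.

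Third, I would read off the remaining analytic properties from the representation. For the moments, independence of $\mathcal{R}$ and $U^{(k)}$ together with $\mathbb{E}[U^{(k)}]=0$ and $\mathbb{E}[U^{(k)}(U^{(k)})^\top]=\tfrac{1}{k}\mathbf{I}_k$ (the latter being the moment formula of the spherical representation theorem applied in dimension $k$) give $\mathbb{E}(\mathbf{X})=\mu$ and $\mathbb{C}ov(\mathbf{X})=\tfrac{\mathbb{E}(\mathcal{R}^2)}{k}\Lambda^\top\Lambda=\tfrac{\mathbb{E}(\mathcal{R}^2)}{\mathop{\rm rank}(\Sigma)}\Sigma$; matching the quadratic Taylor coefficient of $\phi$ at $0$ in $\Phi_{\mathbf{X}}$ then yields the equivalent expression $-2\phi'(0)\Sigma$. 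For the quadratic form, substitute $\mathbf{X}-\mu=\mathcal{R}\,\Lambda^\top U^{(k)}$ into $Q(\mathbf{X})$; since $\mathbf{X}-\mu$ always lies in the range of $\Sigma$, the value is independent of the choice of generalized inverse and $\Lambda\Sigma^-\Lambda^\top=\mathbf{I}_k$, so $Q(\mathbf{X})\sim\mathcal{R}^2\,(U^{(k)})^\top U^{(k)}=\mathcal{R}^2$. For the density, a spherical $\mathbf{Y}$ admitting a density must have it of the form $g(y^\top y)$ by orthogonal invariance; pushing this through the invertible affine map in the full-rank case $k=n$ and simplifying $\Lambda^{-1}(\Lambda^\top)^{-1}=\Sigma^{-1}$ produces $C\,g\!\left((x-\mu)^\top\Sigma^{-1}(x-\mu)\right)$ with $C=\det(\Sigma)^{-1/2}$ the Jacobian factor.

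The genuinely delicate parts, and where I expect the main obstacle, are the non-computational clauses of the representation bullet: the characterization of which generators $\phi$ are admissible (namely those for which $\phi(t^\top t)$ is a $p$-dimensional characteristic function) and the uniqueness statement that two valid triples for the same law must satisfy $\mu=\mu^*$, $\Sigma=c\Sigma^*$, $\phi^*(\cdot)=\phi(c^{-1}\cdot)$ for some $c>0$. Equality of characteristic functions forces $e^{\mathbf{i}\,t^\top\mu}\phi(t^\top\Sigma t)=e^{\mathbf{i}\,t^\top\mu^*}\phi^*(t^\top\Sigma^* t)$ for all $t$; separating the oscillatory factor recovers $\mu=\mu^*$, but disentangling the scale ambiguity between $\Sigma$ and the argument of $\phi$ — that is, showing the two quadratic forms are proportional — is exactly the identifiability-up-to-scale phenomenon and is not reducible to the elementary manipulations above. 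For this, and for the admissibility characterization (which rests on Schoenberg-type results), I would invoke the standard theory of \cite{Cambanis_1981} and \cite{Fang_1990} rather than reprove it.
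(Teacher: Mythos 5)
Your outline is sound, but you should know that the paper itself offers no proof of this theorem: it is presented as a catalogue of standard facts about elliptical laws, with the burden carried entirely by the references to \cite{Kelker_1970}, \cite{Cambanis_1981} and \cite{Fang_1990}. So there is no ``paper's route'' to compare against; what you have written is a genuine derivation where the paper has only a citation. Your strategy --- establish the stochastic representation $\mu+\mathcal{R}\,\Lambda^\top U^{(k)}$ first, read the characteristic function, the closure under affine maps, the moments, the quadratic form and the density off that single identity, and defer the Schoenberg-type admissibility criterion and the identifiability-up-to-scale of $(\Sigma,\phi)$ to the cited literature --- is exactly how the standard treatments organize the material, and your individual computations are correct: in particular $\Lambda\Sigma^{-}\Lambda^\top=\mathbf{I}_k$ does hold for the Moore--Penrose inverse when $\Lambda$ has full row rank, and the value of $Q$ is indeed insensitive to the choice of generalized inverse because $\mathbf{X}-\mu$ lies in the range of $\Sigma$. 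Two small points of hygiene. First, with $\Lambda$ of size $k\times n$ the spherical factor $\mathbf{Y}$ must live in $\mathbb{R}^k$, i.e.\ $\mathbf{Y}\sim\mathcal{S}_k(\phi)$; the paper's definition writes $\mathcal{S}_n(\phi)$, and you silently inherit that dimensional slip before correcting it when you invoke $U^{(k)}$. Second, you are right to restrict the density clause to the full-rank case with $\mathcal{R}$ admitting a density: the theorem as stated (``an elliptic distribution has a density function of the form\ldots'') is false without those hypotheses --- a point mass and the uniform law on an ellipse are elliptical with no Lebesgue density --- so your qualified version is the correct one, and strictly speaking you are proving a repaired statement rather than the one printed.
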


\begin{remark}
The second item of this theorem states that $\Sigma, \phi, \Lambda$ are not unique, unless we impose a condition on the determinant, that is $\mathop{\rm {det}}(\Sigma)=1$.
\end{remark}

\begin{remark}
The distribution function of $\mathcal{R}$  is a key characteristic of the elliptical distribution and can lead to elliptical distributions that share most of their characteristic in common. 
Hence it is called the generating variate of the elliptical distribution family. 
Using extreme value theory, it is useful to characterize fat tailed elliptical distributions. 
If the generating variate $\mathcal{R}$ belongs to the maximum domain of attraction of the Frechet distribution (\cite{Embrechts_2003}), this writes as $\overline{F}_{\mathcal{R}}= \lambda(x) \times x^{-\alpha}$ 
for all $x > 0$, where $\alpha > 0$ and $\lambda$ is a slowly varying function (\cite{resnick2014extreme}). 
The parameter $\alpha$ is then called the tail index of the generating distribution function $F_{\mathcal{R}}$ which corresponds also to the tail index of the regularly varying random vector $\mathbf{X}$ (\cite{Hult_2002}). 
This shows in particular that multivariate elliptical distributions allow for heavy tails while encompassing at the same time a simple linear dependence structure from the multi variate normal. Hence, in addition to the multi variate normal distribution, the multi variate $t$-distribution 
\cite{Fang_1990}, the symmetric generalized hyperbolic distribution \cite{BarndorffNielsen_1982}, the sub-Gaussian $\alpha$-stable distribution (Rachev and Mittnik, 2000, p. 437) are elliptical distributions.
\end{remark}

\begin{remark}
Because of their representation, elliptical distributions generalizes nicely Gaussian properties. This is because their characteristic function writes as 
\begin{equation}
\Phi_{\mathbf{X}}(t) = e^{\textrm{\bf i}t^\top \mu}\phi(t^\top \Sigma t)
\end{equation}
which is a weakened form of the multi variate Gaussian distribution that is given by
\begin{equation}
\Phi_{\mathbf{X}}(t) = e^{\textrm{\bf i}t^\top \mu} \exp(-1/2 t^\top \Sigma t)
\end{equation}
\end{remark}

\begin{remark}
Because of their nice property of generalizing multi variate Gaussian distribution with potentially fat tailed distributions, elliptical distributions are meaningful for financial data modeling. 
The theory of portfolio optimization developed by \cite{Markowitz_1952} and continued by \cite{Tobin_1958}, \cite{Sharpe_1963} is the basis of modern portfolio risk management. It relies on the Gaussian distribution hypothesis and its quintessence is that the portfolio diversification effect depends essentially on the covariance matrix, i.e. the linear dependence structure of the portfolio components. Elliptical distributions generalize nicely this portfolio theory as they cope well with linear transformations. In particular, if the returns on all assets available for portfolio formation are jointly elliptically distributed, then all portfolios can be characterized completely by their location and scale – that is, any two portfolios with identical location and scale of portfolio return have identical distributions of portfolio return. Various features of portfolio analysis, including mutual fund separation theorems and the Capital Asset Pricing Model, can be easily extended for all elliptical distributions, making this kind of distributions appealing.
\end{remark}

\section{Omega ratio} \label{sec:Omega}
\begin{definition}
As presented in \cite{Keating_Shadwick_2002a} and \cite{Keating_Shadwick_2002b}, for an asset whose return $r$ has a cumulative probability distribution function $F$ and $\theta$ is the target return threshold defining what is considered a gain versus a loss, the Omega ratio is defined as
\begin{equation}
\Omega (\theta )= \frac {\int _{\theta }^{\infty }[1-F(r)]\,dr} {\int _{-\infty }^{\theta }F(r)\,dr}
\end{equation}
\end{definition}

When $\theta$ is set to zero the gain-loss-ratio by Bernardo and Ledoit arises as a special case \cite{Bernardo_2000}. The selling point of the omega ratio compared to Sharpe ratio and other traditional risk ratio is that at first sight, it seems to depend on the entire return distribution through the cumulative probability distribution function $F$ as well as not rely on any particular moments in terms of value and even existence, making it intellectually very attractive. Graphically, for a $\theta$ value of $2.70$ percent and the cumulative distribution given by figure \ref{fig:omega}, the $\Omega (\theta )$ ratio is defined as the ratio of the red area over the blue area.

\begin{figure}[!ht]
\centering
\includegraphics[width=7cm]{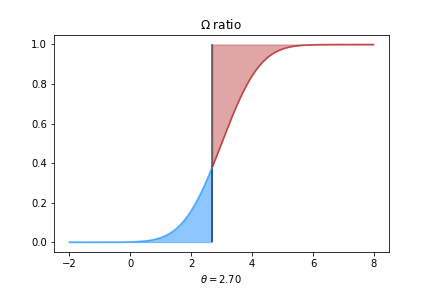}
\vspace{-0.5cm}
\caption{Graphical interpretation of Omega ratio  $\Omega (\theta )$ defined as the ratio of the shaded red over the blue area} \label{fig:omega}
\end{figure}

Omega ratio has been created to be able to compare different assets in terms of their risk profile. Figures \ref{fig:omega_portfolio} and \ref{fig:omega_portfolio2}  provide in log scale the omega ratio for four portfolio. The omega ratio is a real function that is infinite for large negative value of $\theta$ decreases with $\theta$ and tends to 0 for large positive value of $\theta$. The perfect case is to have one curve of a portfolio above all the other one for any value of $\theta$. However, this particular case of \textit{Pareto} optimality (meaning the curve is above all other curves for any value of $\theta$) is extremely rare and in practice, curves cross each other making the call to select one asset or portfolio among the other ones harder.

\begin{figure}[!ht]
\centering
\includegraphics[width=7cm]{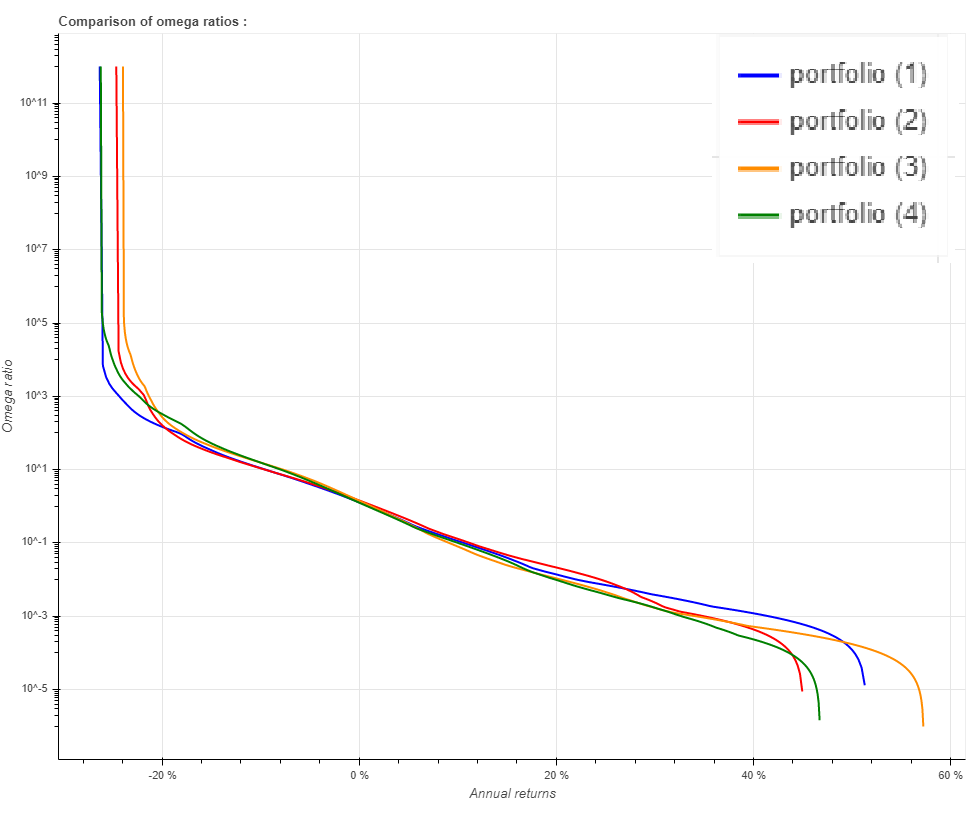}
\vspace{-0.5cm}
\caption{$\Omega (\theta )$ ratio for four portfolios in practice} \label{fig:omega_portfolio}
\end{figure}

\begin{figure}[!ht]
\centering
\includegraphics[width=7cm]{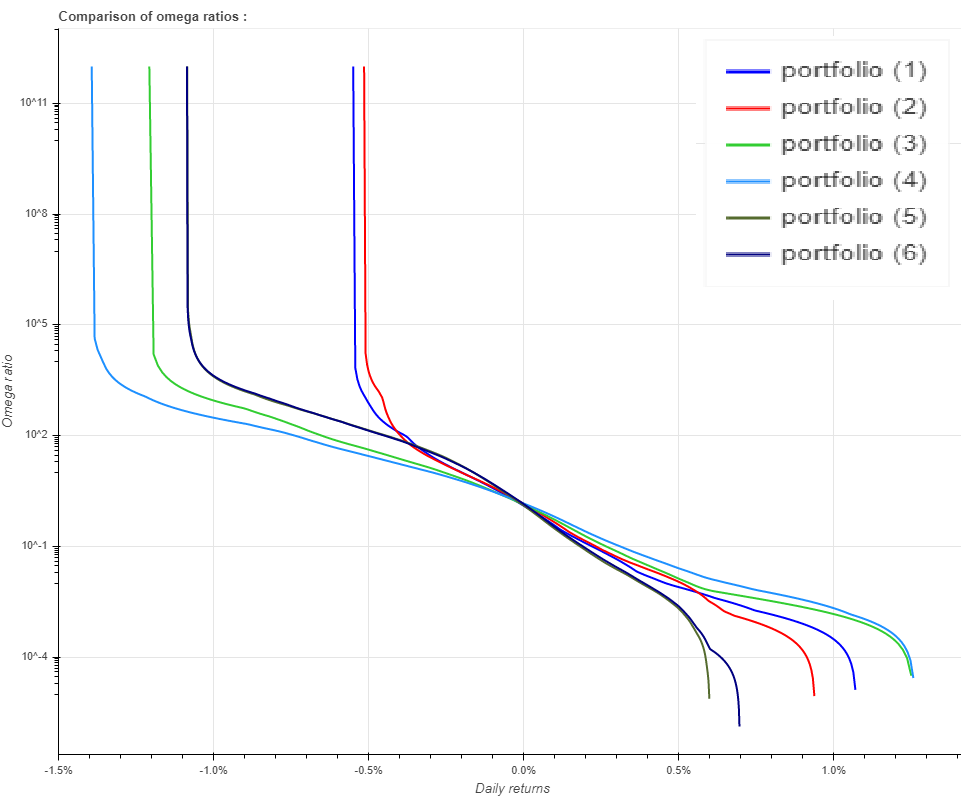}
\vspace{-0.5cm}
\caption{$\Omega (\theta )$ ratio for six portfolios in practice} \label{fig:omega_portfolio2}
\end{figure}

\begin{remark}
It is worth mentioning that the Omega ratio can be meaningless if for instance the numerator term $ \int _{\theta }^{\infty }[1-F(r)]\,dr$ or the denominator term 
$\int _{-\infty }^{\theta }F(r)\,dr$ are undefined. Surprisingly, this is ignored in the literature about the $\Omega (\theta )$ ratio. 
A typical example of a distribution that has an undefined  $\Omega (\theta )$  ratio is a power law $F \sim | r|^{\alpha}$ with $1 < \alpha \leq 2$ .
\end{remark}

\begin{remark}
The case of the power law for $\alpha = 2$ is illuminating. A corresponding elliptical distribution is the Cauchy distribution whose density is given by $f(r) = \frac {1}{\pi \sigma \, \left[1+\left({\frac {r-\mu}{\sigma}}\right)^{2}\right]}$, where $\mu$ is the location parameter and $\sigma$ the scale parameter. Its cumulative density function is given by $F(r) = \frac {1}{\pi } \arctan \left({\frac {r-\mu }{\sigma }}\right)+\frac {1}{2}$. To keep things simple, let us take the case of $\mu=0$ and $\sigma=1$, which is the normalized Cauchy distribution. The numerator of the omega ratio is then given as the limit for $A \rightarrow +\infty$ of 
\begin{eqnarray}
 \int _{\theta }^{A}[1-F(r)]\,dr  &\hspace{-0.5cm} = & \hspace{-0.5cm} \int _{\theta }^{A} \frac {1}{2} - \frac {1}{\pi } \arctan (r) dr  \\
&\hspace{-0.5cm} = & \hspace{-0.5cm}\left[\frac {r}{2} -  \frac {r\arctan (r) - \frac{\ln(1+r^2) }{2}}{\pi } \right]_{\theta }^{A}  \nonumber \\
&\hspace{-0.5cm}  \underset{A \rightarrow  +\infty}{\rightarrow}  & \hspace{-0.5cm} +\infty
\end{eqnarray}
This shows that whenever, we speak about Omega ratio, we need to impose that the two terms (numerator and denominator) are well defined. This is the subject of the following proposition
\end{remark}

\begin{proposition}\label{prop0}
The  $\Omega (\theta )$  ratio terms are defined if and only if the right and left tails of the cumulative distribution are dominated by $\frac{1}{|r|^{\alpha}}$ with $\alpha > 1$. This means in particular that the existence of the $\Omega (\theta )$  ratio  implies that $\underset{r \rightarrow  -\infty}{\lim} r F(r) = 0$.
\end{proposition}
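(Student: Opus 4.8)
The plan is to treat the numerator $\int_\theta^\infty [1-F(r)]\,dr$ and the denominator $\int_{-\infty}^\theta F(r)\,dr$ separately, since each is an improper integral of a \emph{monotone} non-negative integrand, and to reduce both questions to the classical comparison test together with a Cauchy-type tail criterion. First I would record the structural facts that drive everything: on $[\theta,\infty)$ the map $r \mapsto 1-F(r)$ is non-increasing, non-negative, and tends to $0$; on $(-\infty,\theta]$ the map $r \mapsto F(r)$ is non-decreasing, non-negative, and tends to $0$. Hence convergence is purely a question of the \emph{rate} at which each tail vanishes, and the finiteness of the two $\Omega(\theta)$ terms is exactly the finiteness of $\mathbb{E}[r^+]$ and $\mathbb{E}[r^-]$ through the usual survival-function representation.

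For the sufficiency direction I would simply invoke comparison. If there are $C>0$ and $r_0$ with $1-F(r) \le C\,r^{-\alpha}$ for $r \ge r_0$ and $F(r) \le C\,|r|^{-\alpha}$ for $r \le -r_0$, with $\alpha>1$, then both tails are dominated by the convergent integrals $\int^\infty r^{-\alpha}\,dr$ and $\int_{-\infty} |r|^{-\alpha}\,dr$, so both $\Omega(\theta)$ terms are finite. This step is routine.

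The substantive part is the necessary direction, and here monotonicity does the real work. For the denominator I would fix $r<\theta$ negative and use that $F$ is non-decreasing on $[2r,r]$ (so $F(s)\ge F(2r)$ there) to get $\int_{2r}^r F(s)\,ds \ge |r|\,F(2r)$. Convergence of $\int_{-\infty}^\theta F$ forces the sliding-window pieces $\int_{2r}^r F\,ds \to 0$ as $r\to-\infty$ by the Cauchy criterion, whence $|r|\,F(2r)\to 0$; writing $u=2r$ gives $|u|\,F(u)\to 0$, i.e.\ $\lim_{u\to-\infty} u\,F(u)=0$, which is precisely the ``in particular'' conclusion. The identical argument applied to $1-F$ on $[r,2r]$ for $r>0$ yields the right-tail analogue $\lim_{r\to+\infty} r\,(1-F(r))=0$. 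Together these say that each tail must decay strictly faster than $|r|^{-1}$, which rules out the borderline power law $\alpha=1$ responsible for the divergent Cauchy example of the preceding remark.

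The main obstacle is not depth but precision in stating the equivalence: strict pointwise domination by $|r|^{-\alpha}$ with $\alpha>1$ is sufficient but not literally necessary, since integrable tails such as $1/(r\log^2 r)$ admit no such $\alpha$. I would therefore carry the necessary direction rigorously through the clean limits $\lim_{r\to-\infty} r\,F(r)=0$ and $\lim_{r\to+\infty} r\,(1-F(r))=0$, and read the power-law phrasing as the assertion that \emph{among regularly varying tails}---the relevant case for the elliptical, fat-tailed distributions of this paper---only those with tail index exceeding $1$ make $\Omega(\theta)$ well defined.
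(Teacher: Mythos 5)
Your proposal is correct, and on the substantive point it takes a genuinely different (and more careful) route than the paper. The sufficiency direction is identical: both you and the paper invoke the comparison test against $\int |r|^{-\alpha}\,dr$ with $\alpha>1$. The divergence is in how the ``in particular'' conclusion $\lim_{r\to-\infty} rF(r)=0$ is obtained. The paper asserts, via the comparison test, that convergence of $\int_{-\infty}^{\theta}F(r)\,dr$ is \emph{equivalent} to domination of the tail by $|r|^{-\alpha}$ with $\alpha>1$, and then deduces $\lim_{r\to-\infty} rF(r)=0$ from that domination by the elementary bound $rF(-r)\le r^{1-\alpha}$. As you correctly observe, the ``only if'' half of that equivalence is not literally true (tails like $1/(|r|\log^{2}|r|)$ are integrable but admit no such $\alpha$), so the paper's chain of reasoning for the limit rests on a converse the comparison test does not provide. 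Your argument bypasses this entirely: monotonicity of $F$ gives $\int_{2r}^{r}F(s)\,ds\ge |r|\,F(2r)$, and the Cauchy criterion for the convergent improper integral forces the left side to vanish, yielding $\lim_{u\to-\infty}uF(u)=0$ directly from convergence with no power-law hypothesis. This is the standard ``$xf(x)\to 0$ for convergent integrals of monotone functions'' device, and it both repairs the logical gap and makes the ``in particular'' clause an honest consequence of the existence of the $\Omega(\theta)$ terms rather than of the (stronger, and not equivalent) domination condition. You also treat the right tail explicitly, where the paper only remarks that it is symmetric. The one thing to keep in mind is that your reading of the proposition's ``if and only if'' as holding among regularly varying tails is an interpretive charity the paper does not itself spell out; but it is the right way to make the statement true, and your proof of the limits is rigorous as written.
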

\begin{proof}
Given in appendix \ref{app:proof0}
\end{proof}

In contrast, the Sharpe ratio is for a given level $\theta$ is defined as follows:

\begin{definition}
The Sharpe ratio (for the reference level $\theta$) $S (\theta )$ is defined as the excess return over $\theta$ divided by the standard deviation
\begin{equation}
S (\theta )= \frac {\mu-\theta}{\sigma}
\end{equation}
\end{definition}

As said in the introduction, Sharpe and Omega ratio have been strongly opposed. We shall see this is not completely correct. 
In the particular case of a normal probability distribution function $F$, it is fairly easy to compute the $\Omega (\theta )$. Let us denote by $\psi$ the standard normal probability distribution function $\psi(x)=\frac{1}{\sqrt{2\pi}} exp(-x^2/2)$ and $\Psi(x)$, the standard normal cumulative probability distribution function 
$\Psi(x) = \int _{-\infty}^{x}\psi(u)du$. We have

\begin{proposition}\label{prop1}
If returns follow a normal probability distribution function $\mathcal{N}(\mu,\sigma)$, the  $\Omega (\theta )$  ratio is given by
\begin{equation}\label{eq:prop1}
\Omega (\theta ) = 1 + \frac{1}{  \frac{\psi( S(\theta) )}{S(\theta) } -  \Psi(-S(\theta) ) }
\end{equation}
\end{proposition}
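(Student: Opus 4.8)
The plan is to avoid evaluating the two integrals in $\Omega(\theta)$ separately and instead exploit a put--call type identity that links them. First I would observe that, for any distribution with finite mean, the numerator and denominator are the expected gain and the expected loss relative to the threshold: writing $N = \int_{\theta}^{\infty}[1-F(r)]\,dr$ and $D = \int_{-\infty}^{\theta}F(r)\,dr$, the tail formula $\mathbb{E}[Y]=\int_{0}^{\infty}\mathbb{P}(Y>t)\,dt$ for a nonnegative variable (with the substitutions $r=\theta\pm t$) gives $N=\mathbb{E}[(r-\theta)^{+}]$ and $D=\mathbb{E}[(\theta-r)^{+}]$. Since $(r-\theta)^{+}-(\theta-r)^{+}=r-\theta$ identically, taking expectations yields the key relation $N-D=\mu-\theta$.

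This identity is the crux, because it lets me write
\begin{equation}
\Omega(\theta)=\frac{N}{D}=1+\frac{N-D}{D}=1+\frac{\mu-\theta}{D},
\end{equation}
reducing the whole problem to the single integral $D=\mathbb{E}[(\theta-r)^{+}]$ for the Gaussian. I would compute this by standardizing: with $z=(r-\mu)/\sigma$ the upper limit $r=\theta$ becomes $z=-S(\theta)$ and $\theta-r=-\sigma\bigl(S(\theta)+z\bigr)$, so $D=-\sigma\int_{-\infty}^{-S(\theta)}\bigl(S(\theta)+z\bigr)\psi(z)\,dz$. The term $\int_{-\infty}^{-S(\theta)}z\,\psi(z)\,dz$ integrates in closed form using $\psi'(z)=-z\,\psi(z)$, giving $-\psi(S(\theta))$ after invoking the evenness of $\psi$, while the remaining term is $S(\theta)\,\Psi(-S(\theta))$. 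Collecting, $D=\sigma\bigl[\psi(S(\theta))-S(\theta)\,\Psi(-S(\theta))\bigr]$.

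Substituting back and using $\mu-\theta=\sigma S(\theta)$ gives $\Omega(\theta)=1+\dfrac{\sigma S(\theta)}{\sigma[\psi(S(\theta))-S(\theta)\Psi(-S(\theta))]}$; dividing numerator and denominator by $S(\theta)$ then produces exactly \eqref{eq:prop1}. The only genuine obstacle is justifying the expected-value reinterpretation, i.e. that $N$ and $D$ are finite and equal to $\mathbb{E}[(r-\theta)^{+}]$ and $\mathbb{E}[(\theta-r)^{+}]$; this rests on Fubini / the tail-integral formula together with finiteness of the Gaussian mean, which holds here and is in any case the content of Proposition \ref{prop0}. Everything downstream of the identity $N-D=\mu-\theta$ is routine calculus. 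I would also remark in passing that $\psi(S)/S-\Psi(-S)>0$, equivalently $D>0$, which follows from the standard Mills-ratio bound $\Psi(-S)<\psi(S)/S$ and guarantees the expression in \eqref{eq:prop1} is well defined.
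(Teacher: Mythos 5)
Your proof is correct, and it reaches the paper's formula by a route that differs in one structurally interesting way. The paper substitutes $x=(r-\mu)/\sigma$ in both integrals, evaluates $\int_{-\infty}^{a}\Psi(x)\,dx = a\Psi(a)+\psi(a)$ by parts, and then uses the Gaussian symmetries $\psi(u)=\psi(-u)$ and $1-\Psi(x)=\Psi(-x)$ to express the numerator in terms of the same antiderivative; the relation ``numerator $=$ denominator $+\,S(\theta)$'' falls out of that symmetry. You instead derive that relation distribution-free, via the put--call parity identity $N-D=\mathbb{E}[(r-\theta)^{+}]-\mathbb{E}[(\theta-r)^{+}]=\mu-\theta$, so that only the single integral $D=\mathbb{E}[(\theta-r)^{+}]$ needs to be computed for the Gaussian; your evaluation of $D$ via $\psi'(z)=-z\psi(z)$ is equivalent to the paper's integration by parts and yields the same $\sigma[\psi(S(\theta))-S(\theta)\Psi(-S(\theta))]$. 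What your decomposition buys is that the ``$1+$'' structure of \eqref{eq:prop1} is exposed as a general fact about any finite-mean distribution, with only the denominator being model-specific; what the paper's route buys is that the symmetry argument it leans on is exactly the one that carries over verbatim to the elliptic case in Proposition \ref{prop3}. Two minor caveats: your appeal to Proposition \ref{prop0} for finiteness is unnecessary for the Gaussian (finite mean suffices, as you note), and your closing remark that $\psi(S)/S-\Psi(-S)>0$ via the Mills-ratio bound only holds for $S(\theta)>0$; for $S(\theta)<0$ that quantity is negative (while $D>0$ always), and the formula remains valid with $\Omega(\theta)<1$. Neither affects the validity of the proof.
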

\begin{proof}
Given in appendix \ref{app:proof1}
\end{proof}

If we explicitly expand equation \eqref{eq:prop1} for the normal distribution, we get
\begin{equation}\label{eq:normal_explicit}
\Omega (\theta ) = 1 + \frac{\sqrt{2 \pi} }{ \frac{exp( -S(\theta)^2 / 2)  }{ S(\theta)  } - \int_{-\infty}^{-S(\theta)} exp(-u^2/2) \, du }
\end{equation}

\begin{proposition}\label{prop2}
If returns follow a normal probability distribution function $\mathcal{N}(\mu,\sigma)$, the $\Omega (\theta )$  ratio is an increasing function of $S(\theta) $
\end{proposition}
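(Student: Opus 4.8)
The plan is to exploit the closed form of Proposition \ref{prop1}, viewing $\Omega$ as a function of the single scalar $s := S(\theta)$, and to show directly that $\frac{d\Omega}{ds} > 0$. A naive differentiation of $\Omega = 1 + \big(\frac{\psi(s)}{s} - \Psi(-s)\big)^{-1}$ would run into an apparent singularity at $s = 0$, where $\psi(s)/s$ blows up and the bracketed denominator changes sign; this is exactly the point that needs care, and I would neutralise it at the outset by clearing denominators. Using $\Psi(-s) = 1 - \Psi(s)$ and multiplying through by $s$, the ratio collapses to
\begin{equation}
\Omega = \frac{\psi(s) + s\,\Psi(s)}{\psi(s) + s\,\Psi(s) - s} =: \frac{N(s)}{D(s)},
\end{equation}
an expression that is manifestly smooth at $s = 0$, so the pole is removable and the value there is $\Omega(0) = 1$, matching the symmetric case $\mu = \theta$.

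With this form in hand the computation is short. First I would record the two elementary facts $\psi'(s) = -s\,\psi(s)$ and $\Psi'(s) = \psi(s)$. Differentiating $N(s) = \psi(s) + s\,\Psi(s)$ then produces a pleasant cancellation, $N'(s) = -s\psi(s) + \Psi(s) + s\psi(s) = \Psi(s)$, whence $D'(s) = \Psi(s) - 1$. Feeding these into the quotient rule, the numerator of $\frac{d\Omega}{ds} = \frac{N'D - ND'}{D^2}$ simplifies, after substituting $D = N - s$, to
\begin{equation}
N'(s)\,D(s) - N(s)\,D'(s) = \psi(s).
\end{equation}
Since $\psi(s) > 0$ for every $s$, the sign of $\frac{d\Omega}{ds}$ is governed entirely by $D(s)^2$, which is positive wherever $D$ does not vanish.

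It therefore remains only to verify that $D$ never vanishes, and this is the last step I would carry out; it is also where the only genuine subtlety of the argument sits. Because $D'(s) = \Psi(s) - 1 = -\Psi(-s) < 0$, the denominator $D$ is strictly decreasing on all of $\mathbb{R}$, and combined with the boundary behaviour $D \to +\infty$ as $s \to -\infty$ and $D \to 0^+$ as $s \to +\infty$ (the latter from the Mills-ratio asymptotics $\Psi(-s) \sim \psi(s)/s$), this forces $D(s) > 0$ everywhere. Consequently $\frac{d\Omega}{ds} = \psi(s)/D(s)^2 > 0$ for all $s$, so $\Omega$ is a strictly increasing function of $S(\theta)$. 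The hard part is thus not any calculation but the bookkeeping around $s = 0$: once the common-denominator rewrite removes the spurious pole, the result falls out of the one-line derivative identity together with the positivity of $D$.
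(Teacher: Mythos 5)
Your proof is correct and takes essentially the same route as the paper: differentiate the closed form from Proposition \ref{prop1} with respect to $s=S(\theta)$ and observe that the derivative is positive; indeed your $\psi(s)/D(s)^2$ coincides with the paper's expression, since $s^2\left(\tfrac{\psi(s)}{s}-\Psi(-s)\right)^2=\left(\psi(s)-s\Psi(-s)\right)^2=D(s)^2$. The only difference is that you also patch the removable singularity at $s=0$ and verify that $D(s)$ never vanishes (equivalently, that $D(s)=\int_{-\infty}^{-s}\Psi(x)\,dx>0$), details the paper's one-line computation leaves implicit.
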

\begin{proof}
Given in appendix \ref{app:proof2}
\end{proof}

\textbf{It follows that maximizing the Omega ratio at level $\theta$ is equivalent to maximize the Sharpe ratio for a risk free rate $\theta$}. 
Moreover, if we are looking at returns at a given level of volatility, it means that we impose that $\sigma$ is a constant. In this particular case, as the Sharpe only depends on the first moment: $S(\theta) = \frac{\mu - \theta}{\sigma} = \frac{\mu - \theta_0}{\sigma} +  \frac{\theta_0 - \theta}{\sigma}$, maximizing the Sharpe does not depend on the corresponding risk free rate level $\theta$ and can be done at the risk free rate. We therefore obtain the important result given by the following proposition

\begin{proposition}\label{prop2bis}
Under a target volatility constraint, and for a normal distribution, maximizing the $\Omega (\theta )$ is strictly the same as maximizing the Sharpe ratio! \end{proposition}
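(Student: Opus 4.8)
The plan is to deduce the claim directly from Proposition \ref{prop2}, which has already established that, for normally distributed returns, $\Omega(\theta)$ is a (strictly) increasing function of the Sharpe ratio $S(\theta)$. The entire argument is then an exercise in tracking which quantity is actually being optimized once a target volatility is imposed, so I would not expect any genuine computation to be needed here.

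First I would fix the admissible set: portfolios whose (normally distributed) return has a prescribed volatility $\sigma = \sigma_0$, parametrized by their weights, so that only the mean $\mu$ varies across the set. Writing $\Omega(\theta) = G(S(\theta))$ for a strictly increasing $G$ as in Proposition \ref{prop2}, and using the elementary fact that a strictly increasing transformation preserves the location of maxima, the maximizer of $\Omega(\theta)$ over the admissible set coincides with the maximizer of $S(\theta)$ over that same set. This reduces maximizing Omega to maximizing the Sharpe ratio.

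Next I would remove the dependence on the reference level $\theta$. With $\sigma_0$ fixed we have $S(\theta) = (\mu - \theta)/\sigma_0$, an affine function of $\mu$ with positive coefficient $1/\sigma_0$ and an additive term $-\theta/\sigma_0$ that is the same for every portfolio in the constraint set. Hence for any two thresholds $\theta$ and $\theta_0$, the relation $S(\theta) = S(\theta_0) + (\theta_0 - \theta)/\sigma_0$ differs from $S(\theta_0)$ only by a constant shift, so maximizing $S(\theta)$ is equivalent both to maximizing $S(\theta_0)$ and to simply maximizing the expected return $\mu$. The optimizer is therefore independent of $\theta$ and is exactly the classical Sharpe-optimal (mean--variance) portfolio.

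Chaining the two steps yields the statement. The only points requiring care --- and where, in my view, the real content sits --- are upstream: one must know that $\Omega(\theta)$ is finite for the normal law, which is guaranteed by Proposition \ref{prop0} since Gaussian tails decay faster than any power $|r|^{-\alpha}$, and, more importantly, that the monotonicity asserted in Proposition \ref{prop2} is \emph{strict}, so that the argmax sets genuinely coincide rather than merely overlap. Granting these, Proposition \ref{prop2bis} follows with no further calculation.
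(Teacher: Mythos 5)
Your proposal is correct and follows essentially the same route as the paper, which derives Proposition \ref{prop2bis} from the monotonicity in Proposition \ref{prop2} together with the observation that, for fixed $\sigma$, $S(\theta) = S(\theta_0) + (\theta_0-\theta)/\sigma$ so the maximizer is independent of the threshold. Your added remarks on finiteness of $\Omega(\theta)$ and on the strictness of the monotonicity are sensible points of care that the paper leaves implicit.
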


We will see in the following that the property is not restricted to normal distribution but can easily be extended to elliptic distributions as we have just used symmetry properties of the distribution. More precisely, we can compute explicitly the  $\Omega (\theta )$  ratio as follows:

\begin{proposition}\label{prop3}
If the $\Omega (\theta )$  ratio  is well defined and if returns follow an elliptic probability distribution function proportional to $g\left(-\frac{1}{2} \left[ \frac{u-\mu}{\sigma} \right]^2 \right)$, writing $\psi(u) = C g( -\frac{1}{2} u^2)$ and $\Psi(v) = \int_{-\infty}^{v} C g( -\frac{1}{2}  u^2) \, du$, the corresponding re-normalized probability  and cumulative  probability distribution functions, as well as the following function $G(u)$:
\begin{eqnarray}\label{eq:Gdef}
G(u)= \int_{-\infty}^u C g(x)\, dx
\end{eqnarray}
then the $\Omega (\theta )$  ratio is given by
\begin{equation}\label{eq:prop3}
\Omega (\theta ) = 1 + \frac{1}{ \frac{G( -\frac{1}{2}  S(\theta)^2 )}{S(\theta) } - \Psi(-S(\theta) )}
\end{equation}
\end{proposition}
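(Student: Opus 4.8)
The plan is to mirror the route used for the normal case in Proposition~\ref{prop1}, exploiting the fact that the only feature of the Gaussian density actually needed there is that it is a function of $-\frac{1}{2}u^2$ — precisely the structure retained by the elliptic density here. First I would recast the two defining integrals of the Omega ratio as option-like expectations. Swapping the order of integration (legitimate because Proposition~\ref{prop0} guarantees the tails decay fast enough for both integrals to converge), one gets $\int_\theta^\infty [1-F(r)]\,dr = \mathbb{E}[(X-\theta)^+]$ and $\int_{-\infty}^\theta F(r)\,dr = \mathbb{E}[(\theta-X)^+]$, so that $\Omega(\theta) = \mathbb{E}[(X-\theta)^+]/\mathbb{E}[(\theta-X)^+]$. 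The elementary identity $(X-\theta)^+-(\theta-X)^+ = X-\theta$ then gives $\mathbb{E}[(X-\theta)^+] = \mathbb{E}[(\theta-X)^+] + (\mu-\theta)$, so that, writing $D = \mathbb{E}[(\theta-X)^+]$ for the denominator,
\begin{equation}
\Omega(\theta) = 1 + \frac{\mu-\theta}{D}.
\end{equation}

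Next I would compute $D$ explicitly. Writing the density as $f(s) = \frac{1}{\sigma}\psi\!\left(\frac{s-\mu}{\sigma}\right)$ and substituting $u = \frac{s-\mu}{\sigma}$, the upper limit $s=\theta$ becomes $u=-S(\theta)$ since $S(\theta)=\frac{\mu-\theta}{\sigma}$, while $\theta-s = -\sigma\,(S(\theta)+u)$. Collecting terms yields
\begin{equation}
D = -\sigma\, S(\theta)\,\Psi(-S(\theta)) - \sigma\!\int_{-\infty}^{-S(\theta)} u\,\psi(u)\,du,
\end{equation}
so the entire computation reduces to the single first-moment integral $\int_{-\infty}^{-S(\theta)} u\,\psi(u)\,du$.

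The crux of the argument — and the step that makes the elliptic case behave exactly like the Gaussian one — is that this integral has a closed antiderivative built from $G$. Since $\psi(u)=C\,g(-\frac{1}{2}u^2)$ and $G'=Cg$, differentiating $u\mapsto G(-\frac{1}{2}u^2)$ gives $\frac{d}{du}G(-\frac{1}{2}u^2) = -u\,C\,g(-\frac{1}{2}u^2) = -u\,\psi(u)$, which is precisely the elliptic generalization of $\frac{d}{du}e^{-u^2/2}=-u\,e^{-u^2/2}$ that drove the normal case. Because $G(-\infty)=0$, this yields $\int_{-\infty}^{-S(\theta)} u\,\psi(u)\,du = -G(-\frac{1}{2}S(\theta)^2)$ for either sign of $S(\theta)$, so no monotonicity assumption on a change of variable is needed. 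Substituting back gives $D = \sigma\big[G(-\frac{1}{2}S(\theta)^2) - S(\theta)\,\Psi(-S(\theta))\big]$, and since $\mu-\theta=\sigma S(\theta)$ the factor $\sigma$ cancels; dividing numerator and denominator of $\frac{\mu-\theta}{D}$ by $S(\theta)$ produces exactly \eqref{eq:prop3}. I expect the only genuine subtlety to be the bookkeeping of signs and integration limits under standardization, together with invoking Proposition~\ref{prop0} to justify convergence at $-\infty$ and the interchange of integration order.
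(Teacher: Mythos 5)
Your proof is correct, but it takes a genuinely different route from the paper's. The paper standardizes the two defining integrals, converts the numerator via the reflection identity $1-\Psi(x)=\Psi(-x)$ into $\int_{-\infty}^{S(\theta)}\Psi(x)\,dx$, and evaluates both numerator and denominator by the integration by parts $\int_{-\infty}^{a}\Psi(x)\,dx = a\Psi(a)+G(-\tfrac{1}{2}a^2)$, with the boundary terms killed by the tail bounds from Proposition~\ref{prop0}; the ``$1+{}$'' structure then emerges from $\Psi(S)+\Psi(-S)=1$. You instead pass through the expectation (option-like) representation $\Omega(\theta)=\mathbb{E}[(X-\theta)^+]/\mathbb{E}[(\theta-X)^+]$ and the put--call-parity identity $(X-\theta)^+-(\theta-X)^+=X-\theta$, which delivers $\Omega(\theta)=1+(\mu-\theta)/D$ immediately and reduces everything to the single lower partial moment $D$. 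Both arguments hinge on the same key observation --- that $G(-\tfrac{1}{2}u^2)$ is an antiderivative of $-u\,\psi(u)$, the elliptic analogue of $\frac{d}{du}e^{-u^2/2}=-ue^{-u^2/2}$ --- so the computational core is shared; what your route buys is that the additive decomposition $1+{}$ comes for free from an algebraic identity rather than from the symmetry of $\Psi$, and the Fubini step (justified, as you say, by Tonelli and Proposition~\ref{prop0}) makes the probabilistic meaning of the two integrals explicit. One small point worth making explicit: the step $\mathbb{E}[(X-\theta)^+]=\mathbb{E}[(\theta-X)^+]+(\mu-\theta)$ uses $\mathbb{E}[X]=\mu$, which holds here because $\psi(u)=Cg(-\tfrac{1}{2}u^2)$ is even and $X$ is integrable (again by Proposition~\ref{prop0}); this is where the symmetry of the elliptic density enters your argument, just as it enters the paper's through $1-\Psi(x)=\Psi(-x)$.
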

\begin{proof}
Given in appendix \ref{app:proof3}
\end{proof}

\begin{remark}
The re-normalizing constant $C$ is defined as
$\int_{-\infty}^{\infty} C g( -\frac{1}{2}  u^2) \, du = 1$, or equivalently 
\begin{equation}
C = \frac{1}{\int_{-\infty}^{\infty} g( -\frac{1}{2}  u^2) \, du }
\end{equation}
\end{remark}

\begin{remark}
In the case of the normal distribution, the function $g$  is simply the exponential function.
\end{remark}

\begin{proposition}\label{prop4}
For elliptic distribution, the $\Omega (\theta )$  ratio is an increasing function of $S(\theta)$.
\end{proposition}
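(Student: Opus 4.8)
The plan is to differentiate the closed form of Proposition \ref{prop3} directly with respect to $S=S(\theta)$ and read off the sign. Writing $\Omega(\theta)=1+1/D(S)$ with
\[
D(S)=\frac{G(-\tfrac12 S^2)}{S}-\Psi(-S),
\]
it suffices to show $D'(S)\le 0$, because then $d\Omega/dS=-D'(S)/D(S)^2$ will carry the sign of $-D'(S)\ge 0$ \emph{irrespective} of the sign of $D(S)$ itself. This is the decisive structural point: the square in the denominator makes the argument insensitive to whether $\mu>\theta$ or $\mu<\theta$ (equivalently to the sign of $S$), so we need not track the sign of the bracketed denominator. This route mirrors the Gaussian case of Proposition \ref{prop2}; indeed, taking $g=\exp$ gives $G(-\tfrac12 S^2)=\psi(S)$, so the computation below is a genuine generalization of that proof.

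First I would differentiate the two pieces of $D$. Using $G'(u)=Cg(u)$ together with the chain rule and the identity $\psi(S)=Cg(-\tfrac12 S^2)$,
\[
\frac{d}{dS}\,G(-\tfrac12 S^2)=Cg(-\tfrac12 S^2)\cdot(-S)=-S\,\psi(S),
\]
so the quotient rule yields $\frac{d}{dS}\left[\frac{G(-\frac12 S^2)}{S}\right]=-\psi(S)-\frac{G(-\frac12 S^2)}{S^2}$. For the second piece I would invoke that $\psi$ is \emph{even} --- which is exactly the symmetry of elliptic densities, since $\psi(u)=Cg(-\frac12 u^2)$ depends on $u$ only through $u^2$ --- so that $\frac{d}{dS}\bigl[-\Psi(-S)\bigr]=\psi(-S)=\psi(S)$.

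The key cancellation is that the two $\psi(S)$ contributions annihilate, leaving the clean expression $D'(S)=-G(-\tfrac12 S^2)/S^2$. It then remains to sign $G(-\tfrac12 S^2)=\int_{-\infty}^{-\frac12 S^2}Cg(x)\,dx$. Since the upper limit satisfies $-\tfrac12 S^2\le 0$ and on the range of integration $Cg(x)=\psi(\sqrt{-2x})\ge 0$ (the integrand is a value of the probability density), this integral is nonnegative, so $D'(S)\le 0$ and therefore
\[
\frac{d\Omega}{dS}=\frac{G(-\tfrac12 S^2)}{D(S)^2\,S^2}\ge 0,
\]
which establishes that $\Omega(\theta)$ is increasing in $S(\theta)$.

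I expect the delicate points to be bookkeeping rather than conceptual. The main one is arguing the positivity of $G(-\tfrac12 S^2)$ from the fact that $Cg$ coincides with the density on the half-line $(-\infty,0]$ over which it is integrated, rather than assuming $g$ is globally nonnegative; the evenness of $\psi$ must be used here as well. A second, minor point is the apparent singularity at $S=0$, where the factor $1/S$ seems to blow up but $\Omega\to 1$, consistent with $\mu=\theta$ and the symmetry forcing equal gain and loss areas. Both are controlled once the evenness of $\psi$ and the well-definedness hypothesis of Proposition \ref{prop0} are in force.
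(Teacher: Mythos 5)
Your proof is correct and follows essentially the same route as the paper's: differentiate the closed form of Proposition \ref{prop3}, use $\frac{d}{dS}G(-\tfrac12 S^2)=-S\psi(S)$ together with the evenness of $\psi$ to cancel the two $\psi(S)$ terms, and conclude from the nonnegativity of $G(-\tfrac12 S^2)$. You in fact supply two details the paper leaves implicit --- why $G(-\tfrac12 S^2)\ge 0$ (the integrand $Cg$ coincides with the density on $(-\infty,0]$) and why the sign of the denominator $D(S)$ is irrelevant --- which strengthens rather than alters the argument.
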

\begin{proof}
Given in appendix \ref{app:proof4}
\end{proof}

We come now to the central property
\begin{proposition}\label{prop4final}
Under a target volatility constraint, and for an elliptic distribution, maximizing the $\Omega (\theta )$ is strictly the same as maximizing the Sharpe ratio! \end{proposition}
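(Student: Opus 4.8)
The plan is to reduce the maximization of $\Omega(\theta)$ over portfolio weights to the maximization of the Sharpe ratio $S(\theta)$, by combining the closed form of Proposition \ref{prop3} with the monotonicity of Proposition \ref{prop4}. The whole argument is a transcription of the normal case treated just before Proposition \ref{prop2bis}, the only genuinely new ingredient being a stability remark about elliptic distributions.

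First I would invoke the stability property recalled in the properties theorem of Section \ref{sec:ED}: if the asset returns are \emph{jointly} elliptically distributed with a fixed generator (equivalently, a fixed shape function $g$), then every portfolio, being a linear combination of the components, is again elliptic with the \emph{same} generator, only its location $\mu$ and scale $\sigma$ varying with the weights. Consequently the auxiliary functions $G(\cdot)$ and $\Psi(\cdot)$ appearing in Proposition \ref{prop3} are common to all portfolios, and the representation
\[
\Omega(\theta) = 1 + \frac{1}{\frac{G(-\frac{1}{2}S(\theta)^2)}{S(\theta)} - \Psi(-S(\theta))}
\]
shows that, across the admissible weight space, $\Omega(\theta)$ depends on the portfolio \emph{only} through the scalar $S(\theta) = \frac{\mu-\theta}{\sigma}$. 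Since $\Omega(\theta)$ is a strictly increasing function of $S(\theta)$ by Proposition \ref{prop4}, and a strictly increasing transformation preserves the set of maximizers, the portfolio maximizing $\Omega(\theta)$ coincides with the portfolio maximizing $S(\theta)$, for any fixed threshold $\theta$.

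It then remains to impose the target volatility constraint $\sigma = \sigma_0$. Under this constraint $S(\theta) = \frac{\mu-\theta}{\sigma_0}$ is an increasing affine function of the expected return $\mu$, so maximizing $S(\theta)$ is the same as maximizing $\mu$; moreover, for any two reference levels $\theta$ and $\theta_0$ one has $S(\theta) = S(\theta_0) + \frac{\theta_0-\theta}{\sigma_0}$, where the additive term is a constant shared by all admissible portfolios. Hence the \emph{ranking} of portfolios by $S(\theta)$, and therefore by $\Omega(\theta)$, is independent of $\theta$, and the maximizer is exactly the classical Sharpe-optimal portfolio (computed, say, at the risk-free rate). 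I expect the main, and rather minor, obstacle to be making the common-generator observation fully rigorous: one must check carefully that all portfolios indeed share the same $g$, $G$ and $\Psi$, so that $\Omega(\theta)$ is genuinely a function of $S(\theta)$ alone and the monotonicity of Proposition \ref{prop4} can be applied uniformly over the whole weight space rather than merely along a single portfolio.
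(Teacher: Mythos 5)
Your proposal is correct and follows essentially the same route as the paper, which derives Proposition \ref{prop4final} from the closed form of Proposition \ref{prop3}, the monotonicity of Proposition \ref{prop4}, and the same target-volatility argument used for the normal case in Proposition \ref{prop2bis}. Your explicit remark that joint ellipticity guarantees every portfolio shares the same generator $g$ (hence the same $G$ and $\Psi$), so that $\Omega(\theta)$ is genuinely a function of $S(\theta)$ alone across the whole weight space, is a point the paper only states informally in a remark of Section \ref{sec:ED}, and making it explicit is a welcome tightening rather than a different argument.
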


\section{conclusion}
In this paper, we have shown that under  elliptic distributions assumptions, with tail dominated by $1/r^{\alpha}$ for some $\alpha>1$, the $\Omega (\theta )$ ratio is well defined. Under these assumption, the $\Omega (\theta )$ ratio can be computed in closed form and is an increasing function of the Sharpe ratio for a risk free rate taken to $\theta$. Assuming that we impose a target volatility constraint, meaning we only care about portfolio for a given level of volatility, maximizing the  $\Omega (\theta )$ ratio is not any more dependent on the parameter $\theta$ and is obtained by maximizing the true Sharpe ratio for the real risk free rate. This means that under target volatility constraint, first order dominance (higher first moment) provides the best Sharpe and the best Omega ratio.

\bibliography{biblio}

\section{Appendix}

\subsection{Proof of Proposition \ref{prop0}}\label{app:proof0}
The case of the numerator and denominator are the same. Hence, let us only prove the case for the denominator. Using the comparison test for improper integral and the fact that $F$ is always positive, we have that $\int _{-A}^{\theta }F(r)\,dr$ is well defined for $A \rightarrow  \infty$ if and only $F$ is dominated by some function $\frac{1}{| r | ^{\alpha}}$ for $\alpha > 1$. To conclude the proof, let us first notice that
\begin{eqnarray}
\underset{r \rightarrow  -\infty}{\lim} r F(r) &=& - \underset{r \rightarrow  \infty}{\lim}  r F(-r)
\end{eqnarray}

As $F$ is dominated by  $\frac{1}{|r|^{\alpha}}$ for $\alpha > 1$ for large negative value of $r$, we have easily that 
\begin{eqnarray}
\underset{r \rightarrow  \infty}{\lim}  r F(-r) & \leq& \underset{r \rightarrow  \infty}{\lim} \frac{1}{r^{\alpha-1}} \\
&  \underset{r \rightarrow  \infty}{\rightarrow } & 0
\end{eqnarray}
which shows that $\underset{r \rightarrow  -\infty}{\lim} r F(r) = 0$
\qed.

\subsection{Proof of Proposition \ref{prop1}}\label{app:proof1}
If returns follow a normal probability distribution function $\mathcal{N}(\mu,\sigma)$, the Omega ratio, $\Omega (\theta )$, writes as:
\begin{eqnarray}
\Omega (\theta ) &=& \frac {\int _{\theta }^{\infty }[1-\Psi\left(\frac{r-\mu}{\sigma}\right)]\,dr} {\int _{-\infty }^{\theta }\Psi \left(\frac{r-\mu}{\sigma} \right) \,dr} \\
& = & \frac {\int _{\frac{ \theta-\mu}{\sigma} }^{\infty }[1-\Psi\left(x\right)]\, \sigma \, dx } {\int _{-\infty }^{\frac{ \theta-\mu}{\sigma} }\Psi \left(x \right) \, \sigma  \,dx} \nonumber \\
& = & \frac {\int _{-S(\theta)}^{\infty }[1-\Psi\left(x\right)]\,dx } {\int _{-\infty }^{-S(\theta) }\Psi \left(x \right) \, dx} 
\end{eqnarray}
\noindent
where $S(\theta) = \frac{ \mu - \theta}{\sigma}$. Using an integration by parts, we have
\begin{eqnarray}
\int _{-\infty }^{a}\Psi \left(x \right) \, dx &=& \left[ x \Psi(x) \right]_{-\infty }^{a} - \int _{-\infty }^{a} x \psi \left(x \right) \, dx \nonumber \\
&=& a \Psi(a)  + \psi \left(a  \right) 
\end{eqnarray}
Using this result and the fact that $\psi(u)=\psi(-u)$, $1-\Psi\left(x\right)=\Psi\left(-x\right)$, we have:
\begin{eqnarray}
\int _{-\infty }^{-S(\theta)  }\Psi \left(x \right) \, dx &=& \psi(S(\theta) ) -S(\theta)  \Psi(-S(\theta) ) \\
\int _{-S(\theta) }^{\infty }[1-\Psi\left(x\right)]\,dx  &=& \psi(S(\theta) ) -S(\theta)  \Psi(-S(\theta) ) + S(\theta)  \nonumber
\end{eqnarray}
Hence the result
\begin{eqnarray}
\Omega (\theta ) = 1 + \frac{1}{ \frac{\psi(S(\theta))}{S(\theta)} - \Psi(-S(\theta))}
\end{eqnarray}
which concludes the proof \qed

\subsection{Proof of Proposition \ref{prop2}}\label{app:proof2}
As a function of $S(\theta)$, the $\Omega (\theta )$ ratio 's derivative with respect to $S(\theta)$ is given by
\begin{eqnarray}
\frac{\partial}{\partial S(\theta)} \Omega (\theta ) =\frac{ \psi(S(\theta)) }{ S(\theta)^2 \left( \frac{\psi(S(\theta))}{S(\theta)} - \Psi(-S(\theta)) \right)^2}
\end{eqnarray}
The strict positiveness of the derivatives concludes the proof that $\Omega (\theta ) $ ratio is an increasing function of $S(\theta)$. \qed

\subsection{Proof of Proposition \ref{prop3}}\label{app:proof3}
With notations of Proposition \ref{prop3}, we can remark that $\psi(u) = \psi(-u)$, $1-\Psi(u) = \Psi(-u)$, $\Psi^{'}(u) = \psi(u)$ and that $G(u)$ is defined (see equation \eqref{eq:Gdef}) such that 
\begin{eqnarray}
\frac{\partial}{\partial u}G( -\frac{1}{2} u^2) = - u \psi(u) 
\end{eqnarray}

We can also remark that the proper definition of the $\Omega (\theta )$ ratio implies (see proposition \ref{prop0}) that $\Psi(v)$ is dominated by some function
 $\frac{1}{|v|^{\alpha}}$ for some $\alpha>1$ for large negative value of $v$. In particular, we have that $\lim_{v \rightarrow -\infty} v \Psi(v) = 0$. The fact
 that $\Psi(v)$ is dominated by some function $\frac{1}{|v|^{\alpha}}$ for some  $\alpha>1$  for large negative value of $v$ implies also that $g$ is dominated
 by $\frac{1}{|v|^{\frac{\alpha+1}{2}}}$ for some $\alpha>1$ for large negative value of $v$, hence the function $G$ defined by $G(u)= \int_{-\infty}^u C g(x)\, dx$ 
is well defined. We also have that $\lim_{v \rightarrow -\infty} v \psi(v) = 0$.

We can therefore compute the following integral by integration by parts:
\begin{eqnarray}\label{eq:int_byparts}
\int _{-\infty }^{a}\Psi \left(x \right) \, dx &=& \left[ x \Psi(x) \right] _{-\infty }^{a} - \int _{-\infty }^{a} x \psi(x)  \, dx  \nonumber \\
&=& a \Psi(a) + G(-\frac{1}{2} a^2)
\end{eqnarray}
where we have used that $\lim_{v \rightarrow -\infty} v \Psi(v) = 0$ and $ \lim_{v \rightarrow -\infty} v \psi(v) = 0$.

Let us come back to our initial problem about computing the $\Omega (\theta )$ ratio. We have
\begin{eqnarray}
\Omega (\theta ) &=& \frac {\int _{\theta }^{\infty }[1-\Psi\left(\frac{r-\mu}{\sigma}\right)]\, \frac{dr}{\sigma}} {\int _{-\infty }^{\theta }\Psi \left(\frac{r-\mu}{\sigma} \right) \, \frac{dr}{\sigma} } \\
& = & \frac {\int _{\frac{ \theta-\mu}{\sigma} }^{\infty }[1-\Psi\left(x\right)]\, dx } {\int _{-\infty }^{\frac{ \theta-\mu}{\sigma} }\Psi \left(x \right) \, dx} \nonumber \\
& = & \frac {\int _{-\infty }^{S(\theta)} \Psi\left( x\right) \,dx } {\int _{-\infty }^{-S(\theta) }\Psi \left(x \right) \, dx}  
\end{eqnarray}
\noindent
where $S(\theta) = \frac{ \mu - \theta}{\sigma}$.  Using the intermediate result \eqref{eq:int_byparts}, we have 
\begin{eqnarray}
\Omega (\theta ) &= & \frac{ S(\theta) \Psi(S(\theta)) + G(-\frac{1}{2} S(\theta)^2)}{ -S(\theta) \Psi(-S(\theta)) +G(-\frac{1}{2}  S(\theta)^2)} \\
&= & 1 +\frac{1}{ \frac{G(-\frac{1}{2}  S(\theta)^2 )}{S(\theta)} -\Psi(-S(\theta))}
\end{eqnarray}
which concludes the proof \qed

\subsection{Proof of Proposition \ref{prop4}}\label{app:proof4}
Under the elliptical assumption for the distribution, proposition \eqref{prop3} applies. 
As a function of $S(\theta)$, the $\Omega (\theta )$ ratio 's derivative with respect to $a$ is given by
\begin{eqnarray}
\frac{\partial}{\partial S(\theta)} \Omega (\theta ) & =& \frac{\partial}{\partial S(\theta)} \frac{1}{ \frac{G(-\frac{1}{2}  S(\theta)^2 )}{S(\theta)} -\Psi(-S(\theta))} \\
& =& - \frac{\frac{ \frac{\partial}{\partial S(\theta)}G(-\frac{1}{2}  S(\theta)^2 )}{ S(\theta)} -   \frac{G(-\frac{1}{2}  S(\theta)^2 )}{S(\theta)^2}+\psi(-S(\theta))}{ \left( \frac{G(-\frac{1}{2}  S(\theta)^2 )}{S(\theta)} -\Psi(-S(\theta)) \right)^2 } \nonumber\\ 
& =& \frac{ G(-\frac{1}{2}  S(\theta)^2 )  }{ S(\theta)^2 \left( \frac{\psi(S(\theta))}{S(\theta)} - \Psi(-S(\theta)) \right)^2}
\end{eqnarray}
where we have used again that 
\begin{eqnarray}
\frac{\partial}{\partial u}G( -\frac{1}{2} u^2) = - u \psi(u) 
\end{eqnarray}
and the parity of the function $\psi$: $\psi(-u) = \psi(u)$. The strict positiveness of the derivatives concludes the proof that $\Omega (\theta ) $ ratio is an increasing function of $S(\theta)$. \qed
\end{document}